\documentclass[submission,copyright,creativecommons]{eptcs}
\usepackage{breakurl}             
\usepackage{underscore}           

\title{Reasoning about Social Choice and Games in Monadic Fixed-Point Logic}

\author{Ramit Das
\institute{IMSc (HBNI), Chennai, India}
\email{ramitd@imsc.res.in}
\and
R. Ramanujam
\institute{IMSc (HBNI), Chennai, India}
\email{jam@imsc.res.in}
\and
Sunil Simon
\institute{Department of CSE \\IIT Kanpur, Kanpur, India}
\email{simon@cse.iitk.ac.in}
}

\usepackage[ruled,vlined,linesnumbered,lined]{algorithm2e}

\usepackage{amsmath}
\usepackage{amsthm}
\usepackage{amssymb}
\usepackage{thmtools,thm-restate}
 \usepackage{enumerate} 
 \usepackage[many]{tcolorbox}
\usepackage{todonotes}
\usepackage{framed}
\usepackage{amsmath}

\newcommand{\lfp}{\mathrm{\bf{lfp}}}
\newcommand{\fv}{\mathit{fv}}

 \newtheorem{theorem}{Theorem}

\begin{document}
\maketitle

\begin{abstract}
Whether it be in normal form games, or in fair allocations, or in
voter preferences in voting systems, a certain pattern of reasoning
is common. From a particular profile, an agent or a group of agents
may have an incentive to shift to a new one. This induces a natural
graph structure that we call the {\em improvement graph} on the
strategy space of these systems. We suggest that the monadic
fixed-point logic with counting, an extension of monadic first-order
logic on graphs with fixed-point and counting quantifiers, is a
natural specification language on improvement graphs, and thus for
a class of properties that can be interpreted across these domains.
The logic has an efficient model checking algorithm (in the size of the
improvement graph).
\end{abstract}


%
\section{Introduction}
A logical study of game theory aims at exposing the assumptions
and reasoning that underlies the basic concepts of game theory.
This involves the study of individual, rational, strategic 
decision making between presented alternatives (in the 
non-cooperative setting). One potential form of reasoning
in such a situation is to envisage all possible strategic choices
by others, consider one's own response to each, then others' 
response to it in their turn, and so on {\em ad infinitum}, with
Nash equilibrium representing fixed-points of such iteration..

Such reasoning, which we might call \textit{improvement dynamics},
 is similar to but distinct from rational decision making under 
uncertainty; it is also similar to but distinct from epistemic 
reasoning. The former is about optimization, selecting the
`best' option in light of one's information; the latter is about
`higher order information' involving information about others'
information etc. Improvement dynamics intends to yield the same
end results as these, but operates at a more operational, 
computational level, and reasoning about it can be seen as
reasoning at the level of computations searching for equilibria.
In this sense, logic is seen as a succinct language for
describing computational structure, rather than as a deductive
system of reasoning by agents. In spirit, the role of such logics
is similar to that of logics in descriptive complexity theory.
If we were to talk of the descriptive complexity of game theoretic
equilibrium notions, it would need to account for the implicit
improvement dynamics embedded in the solution concept.

Interestingly, several contexts in social choice theory embed such
improvement dynamics as well. When we aggregate individual choices
or preferences into social choices / preferences,  or decide on
social action (like resource allocation) based in individual
preferences, once again we see implicit improvement dynamics.
If a particular profile of voter preferences yields a specific
electoral outcome, one can consider a voter announcing a revised
(and altered) preference to force a different outcome. Two agents
might exchange their allocated goods to move to a new allocation,
if they perceive advantage in doing so. Again, these can be seen
as offers and counter-offers, perhaps leading to an equilibrium,
or not.  Some of these situations involve individual improvements,
some (like pairs of agents swapping goods) involve coalitions,
but they have the same underlying computational structure.

In this paper, we suggest that \textit{monadic fixed-point logic (with 
counting)} is a suitable language for reasoning about this 
computational structure underlying games and social choice contexts.
This is an extension of first order logic with monadic least fixed-point 
operators and counting. In this, we follow the spirit of descriptive
complexity, where extensions of first order logics describe complexity 
classes. Formulas offer concise descriptions of reasoning embedded
in improvement dynamics.

Why bother? When we have a common language across contexts, we can 
employ a form of reasoning common in one (say normal form games) in
another (say fair resource allocations) and thus transfer results and
techniques. We show that the idea of improvement under swaps corresponds
to certain form of strong equilibria and coalitional improvement in games.
Dynamics in iterated voting again correponds to improvement dynamics in
games. In such cases when the structures studied possess interesting
properties such as the {\em finite improvement property} or {\em weak
acyclicity} we get certificates of existence of equilibria. Interesting
subclasses of games (such as {\em potential games}) possess such 
properties and by ``transfer'' we can look for similar subclasses in
social choice contexts, and {\em vice versa}.

The choice of monadic fixed-point logic  is also motivated by the fact
that it admits an efficient model checking algorithm. Monadic least
fixed point operator, iterating over subsets of strategy profiles,
suffices for improvement dynamics. Counting can help us constrain paths
succinctly: though counting is first order expressible, such expression 
would be prohibitively long. 

Thus the contribution of this paper is modest and simple. The reasoning
discussed is familiar, that of improvement dynamics in normal form games,
and expressing this in monadic fixed-point logic with counting. In the
process, we can study the same properties in different contexts, such
as normal form games, fair allocations and electoral systems. We also
present a model checking algorithm for the logic.

\medskip

\noindent {\bf Logic and game theory.}  Various logical formalisms
have been used in the literature to reason about games and
strategies. Action indexed modal logics have often been used to
analyse finite extensive form games where the game representation is
interpreted as models of the logical language
\cite{BonBI,Ben01,BenProcess}. A dynamic logic framework can then be
used to describe games and strategies in a compositional manner
\cite{Par85,Gor03,kr08} and encode existence of equilibrium strategies
\cite{HHMW03}. Alternating temporal logic (ATL) \cite{atl} and its
variants \cite{strat-reasoning,atles,SL10} constitute a popular
framework to reason about strategic ability in games, especially
infinite game structure defined by unfoldings of finite graphs. These
formalism are useful to analyse strategic ability in terms of
existence of strategies satisfying certain properties (for example,
winning strategies and equilibrium strategies). Some of the above
logical formalism are also able to make assertions about partial
specifications that strategies have to conform to in order to
constitute a stable outcome. In this work we suggest a framework to
reason about the dynamics involved in iteratively updating strategies
and to analyse the resulting convergence properties. \cite{BG10}
consider dynamics in reasoning about games in the same spirit as ours
and describe it in fixed-point logic. But crucially, the dynamics is
on iterated announcements of players' rationality, and belief revision
in response to it. Moreover, they discuss extensive form games rather
than normal form games.  However, they do advocate the use of the
fixed-point extension of first order logic for reasoning about games.

Monadic least fixed point logic (MLFP) is an extension of first-order
logic which is well studied in finite model theory \cite{Sch06}. It is
a restriction of first order logic with least fixed point in which
only unary relation variables are allowed. MLFP is an expressive logic
for which, on finite relational structures, model checking can be
solved efficiently \cite{EF99}. It is also known that MLFP is
expressive enough to describe various interesting properties of
games on finite graphs. MLFP can also naturally
describe transitive closure of a binary relation which makes it an
ideal logical framework to analyse the dynamics involved in updating
strategies and its convergence properties. When $\alpha$ is a formula
with one first order free variable, $C_x \alpha \leq k$ asserts that
the number of elements in the domain satisfying $\alpha$ is at most
$k$. Clearly, this is expressible in first order logic with equality,
but at the expense of succinctness. In the literature on first order
logic with arithmetical predicates \cite{Sch05}, it is customary to
consider two sorted structures to distinguish between domain elements
and the counts, but since our domain elements are always profiles,
there is no need for such caution.

It well known that a variety of contexts in the mathematical social
sciences can be formulated in terms of improvement dynamics leading
to equilibria (of some kind). Our observation here is that the deployment
of the MLFPC logic can help to unify algorithmic techniques across these
contexts. Rather than devise an algorithm for each problem of this kind,
definability in MLFPC can at once give a uniform algorithm, which could
then be fine-tuned. Admittedly when we present contexts as diverse as
normal form games, allocations in social choice theory or voting rules,
all in one uniform framework, we only get a broad-strokes description of
the models, and the literature on these contexts vary widely in details.
We hope to convince the reader that {\em a priori}, the MLFPC has 
sufficient expressiveness to capture interesting variations. Our hope
is to delineate the logical resources needed to express the variations,
but that will require more work ahead.

\section{The improvement graph structure}
Improvement dynamics is a natural notion to study in the context of
any situation involving strategic interaction of agents. In this
section we formalise this dynamics in terms of the data structure
called improvement graphs. We consider three specific application
domains: strategic form games, voting theory and allocation of
indivisible items. We show how improvement graphs can be interpreted
in these applications and argue that the analysis of the structure
acts as the basis for reasoning about strategic interaction.
 
Let $[n] = \{1,\ldots,n\}$ denote the set of $n$ agents. Each agent is
associated with a finite set of choices $S_i$. A profile of choices
(one for each agent) induces an outcome in the strategic
interaction. Let $S$ denote the set of all choice profiles, $O$ denote
the set of all outcomes and $s(O)$ denote the outcome associated with
the profile $s \in S$. Each agent $i \in [n]$ is associated with a
preference ordering over the outcome set: $\preceq_i \subseteq (O
\times O)$. This ordering induces a preference ordering over profiles
as follows: for $s, s' \in S$ and $i \in [n]$, $s \preceq_i s'$ if
$s(O) \preceq_i s'(O)$. For a choice profile $s=(s_1,\ldots,s_n)$, we
use the standard notation $s_{-i}$ to denote the $n-1$ tuple arising
from $s$ in which the choice of agent $i$ is removed.

The associated \emph{improvement graph} is the directed graph
$G=(V,E)$ where $V = S$ and $E \subseteq V \times [n] \times V$. We
will denote the triple $(s,i,s') \in E$ by $s \to_i s'$. The edge
relation $E$ satisfies the condition: for $i \in [n]$ and $s, s'\in
S$, we have $s\to_i s'$ if $s \prec_i s'$, $s_i \neq s_i'$ and $s_{-i}
= s'_{-i}$. An \emph{improvement path} in $G$ is a maximal sequence of
profiles $s^1 s^2 \cdots$ such that for every $j > 0$ there is a
player $k_j$ such that $s^j \rightarrow_{k_j} s^{j+1}$.  Note that
here we use deviation by a single player to define the improvement
graph. We could easily extend the definition to deviation by a subset
of players, this interpretation might be more relevant in certain
domains.

\subsection{Strategic form games}
A strategic form game is given by the tuple $T=([n],\{S_i\}_{i \in
  [n]}, O, \lambda,\{\preceq_i\}_{i \in [n]})$ where the set of
strategies $S_i$ for agent $i \in [n]$ can be viewed as its set of
choices. For $S=S_1 \times \cdots \times S_n$, the function $\lambda:
S \to O$ associates an outcome to every strategy profile. 
In this paper, we consider only {\em finite} strategic form games.
The notion of \textit{best response} and \textit{Nash equilibrium} are
standard: $s_i$ is best response to $s_{-i}$ if for all $s_i' \in
S_i$, $\lambda(s) \succeq_i \lambda(s_i', s_{-i})$; $s$ is a Nash
equilibrium if for all $i \in [n]$, $s_i$ is best response to
$s_{-i}$. Existence of Nash equilibrium and computation of an
equilibrium profile (when it exists) are important questions in the
context of strategic form games.

Given a strategic form game $T$, let $G_T$ denote the improvement
graph associated with $T$ (as defined above). Improvement paths in
$G_T$ correspond to maximal sequence of strategy profiles that arise
by allowing players to make unilateral profitable deviations that
result in improving their choice according to their preference ordering.
We say that a game has the \emph{finite improvement property} (FIP) if
every improvement path in $G_T$ is finite \cite{MS96}. In an
improvement path, if each $k_j$ edge in the sequence is the best
response of agent $k_j$ to $s^{j-1}_{-k_j}$ then is called a best
response improvement path. We can analogously define the finite best
response property (FBRP) if every best response improvement path is
finite. FIP not only guarantees the existence of Nash equilibrium, but
also ensures the stronger property that a decentralised local search
mechanism convergences to a equilibrium outcome. Various natural
classes of resource allocation games like congestion games
\cite{Ros73}, fair cost sharing games and restrictions of polymatrix
games \cite{coord} are known to have the FIP.

A weakening of FIP was proposed by Young \cite{You93} which insists on
the existence of a finite improvement path starting from any initial
strategy profile. Classes of strategic form games that satisfy this
property are called weakly acyclic games. Note that weak acyclicity
ensures that a randomised local search procedure almost surely
convergence to an equilibrium outcome \cite{MardenAS07}. Examples of
classes of games which has this property include congestion games with
player specific payoff functions \cite{Mil96}, certain internet
routing games \cite{ES11} and network creation games \cite{KL13}.

As we can see, the improvement graph presents a data structure for
analysing normal form games. It captures the epistemic reasoning 
underlying player choices: if I were to consider a particular 
profile of choices by all of us, I would rather choose another
strategy to improve by my payoff; in that case, agent $j$ would
revise her choice; and so on, unless we reach a profile from where
none of us has any reason to deviate. Such reasoning is closely
related to {\em pre-play negotiations} studied by game theorists.


\subsection{Voting systems}
Consider an electorate consisting of a set $[n] = \{1, \ldots, n\}$ of
$n$ voters and a set $C$ of $m$
candidates. Let $\mathfrak{R}$ be a voting rule that considers the
preference of each voter over the candidates and chooses a subset of
winning candidates of size $k$ (since $k$ among $m$ candidates have to
be elected).
The strategy sets for all voters are the same $S = \mathfrak{L}(C)=\{
\pi | \pi \text{ is a permutation of } C\}$. The outcome set is $O =
\binom{C}{k}$.  The voting rule $\mathfrak{R}: S^{n} \mapsto O$
specifies which candidates win given the complete preferences of all
voters. We assume that each voter $i$ has a preference ordering
$\prec_i$ over the outcome set $O$. Thus the voting system can be
given by the tuple $L = (n,m,\prec_1,\ldots,\prec_n, \mathfrak{R})$.
 
The improvement graph $G_L$ associated with $L$ is as before: $G_E =
(V, E)$ where $V = S^n$, the set of strategy profiles of voters; $E
\subseteq (V \times [n] \times V)$ is the improvement relation for
voter $i$, given by: $s \rightarrow_{i} s'$ if $\mathfrak{R}(s')
\succ_i \mathfrak{R}(s)$, $s_i \neq s_i'$ and for all $j \neq i$, $s_j
= s_j'$.

Voting equilibria have been studied by Myerson and Weber \cite{MW93}.  
In general, one speaks of the {\em bandwagon effect} in an election if
voters become more inclined to vote for a given candidate as her standing 
in pre-election polls improve, or the {\em underdog effect}, if
voters become less inclined to vote for a candidate as her standing improves.
Myerson and Weber suggest that equilibrium arises when the voters, acting in 
accordance with both their preferences for the candidates and their perceptions 
of the relative chances of candidates in contention for victory, generate an 
election result that justifies their perceptions. Note that the improvement
path again gives us the possibility of `interaction' arising from voter 
preferences, and we can analyse this in the context of specific voting rules.

Given that agents may have incentive to strategically misreport their
preferences, it is natural to study the convergence dynamics when
voting is modelled as a game. Iterative voting \cite{Lev12,RE12,Mei17}
is a formalism that is useful to analyse the strategic dynamics when
at each turn a voter is allowed to alter her vote based on the current
outcome until it converges to an outcome from which no voter wants to
deviate.
In general, the outcome of iterative voting may depend on the order of
voters' changes. Again, voters act myopically, without knowing the
others' preferences. This dynamics is again reflected by the
improvement path as discussed here and sink nodes correspond to 
Nash equilibria. Thus given a voting rule, it is natural to ask
what equilibria are reachable from a given vote profile.

\subsection{Allocation of indivisible goods}
An important problem often studied in economics and computer science
is the allocation of resources among rational agents. This problem is
fundamental and has practical implications in various applications
including college admissions, organ exchange and spectrum
assignment. In this paper, we consider the setting where there are
$[n]$ agents and a set $A=\{a_1,\ldots,a_m\}$ of $m$ indivisible
items. An allocation $\pi: N \to 2^{A}$ such that $\cup_{i \in [n]}
\pi(i)=A$ and for all $i,j \in [n]$, $i \neq j$, $\pi(i) \cap \pi(j) =
\emptyset$. In the most general setting, each agent $i$ has a
preference ordering $\prec_i$ over the allocations. Thus an instance
of an allocation problem can be specified as a tuple $H = ([n], A,
\{\prec_i\}_{i \in [n]})$. Let $\Pi$ denote the set of all
allocations. In this setting, each allocation $\pi$ can be viewed as
defining an outcome and agents have a preference ordering over such
outcomes. In a typical allocation problem, it is often assumed that
the preference ordering for each agent $i$ depends only on the bundle
of items assigned to agent $i$. A special case of the above setting is
when $n=m$ (i.e.\ the number of agents and the items are the same) and
$\pi$ is required to be a bijection. An instance of such an allocation
problem $A$ along with an initial allocation $\pi_0$ defines the well
known Shapley-Scarf housing market \cite{SS74}. If agents are allowed
to exchange items with each other, stability of allocation is a very
natural notion to consider. Core stable outcomes are defined as
allocations in which no group of agents have an incentive to exchange
their items as part of an internal redistribution within the
coalition. The improvement graph structure can capture the dynamics
involved in such a sequence of item exchange in a natural manner. The
associated improvement graph can be defined as $G_H=(V,E)$ where
$V=\Pi$. Since the deviation involves exchange of goods among a subset
of players (rather than a unilateral deviation by a single player),
the edge relation is indexed with a subset $u \subseteq [n]$. That is,
for $\pi, \pi' \in \Pi$ and $u \subseteq [n]$, we have $\pi \to_u
\pi'$ if for all $i \in u$, $\pi(i) \prec_i \pi'(i)$, $\pi(i) \neq
\pi'(i)$ and for all $j \notin u$, $\pi(j)=\pi'(j)$.

A finite path in the improvement graph corresponds to a finite
sequence of exchanges that converge to a stable outcome. An important
question is whether stable allocations always exist and whether a
finite sequence of exchanges can converge to such an allocation. For
the housing market, it is known that a simple and efficient procedure
often termed as Gale’s Top Trading Cycle, can compute an allocation
that is core stable. The allocation constructed in this manner also
satisfies desirable properties like strategy-proofness and Pareto
optimality.

The question of whether decentralised swap dynamics converges to
stable allocation has been studied in various related models of
resource allocation. \cite{maudetSwapPower15} analyses optimality in
the setting where pairs of agents exchange items or services. The
influence of neighbourhood structures in terms of item exchange and
its influence on convergence to stable and optimal allocations is
studied in \cite{gourves2017object,CEM17}. Exchange dynamics
with restricted preferences orderings are considered in
\cite{todo15exchange,todocorecomplexity15}.

Note that the improvement graph here is different from the ones we
discussed earlier in a crucial sense. When agents in $u$ swap goods,
the allocation for other players outside $[u]$ is unaffected. If each
agent's preference ordering depends only on the valuation of the bundle
of items that the agent is allocated then their satisfaction is
unchanged. However, agent $1$ may swap goods with $2$ and then use
some the goods acquired to make a swap deal with $3$ thus leading to
interesting causal chains. In effect the entire space of allocations
may be tentatively explored by the agents and the interesting question
is whether they settle to an `equilibrium'. Again the improvement path
offers interesting dynamics, and we can ask whether from any
allocation, we can reach one where no agent wishes to make a swap.  In
the case where the preference orderings depend on externalities
\cite{BPZ13,GSS18} the situation is similar to what we observed in the
context of strategic form games. Apart from stability, notions of
fairness like envy-freeness, propotionality and maximin share
guarantee are also well studied in the context of allocation of
indivisible items \cite{Bud11,BCM16}. Analysis of the improvement
graph is also useful in the context of fairness notions. Existence of
a finite improvement path terminating in a fair allocation would
indicate the possibility of convergence to a fair allocation in terms
of an exchange dynamics.

\subsection{Remark}
We have set up an improvement graph from three different models used in the
mathematical social sciences. This is of course so that a uniform logical
language can be used to specify properties of all these. One natural 
alternative to consider would be to translate all the models into one,
games being the natural choice, and then induce the improvement graph 
over the defined model. This is certainly possible, but in general this
can lead to an increase in the size of the graph. Moreover since we hope to
use MLFPC not only to unify these contexts but also differentiate them
(in terms of logical resources needed), such a reduction would not be
helpful.

\section{Monadic fixed-point logic with counting}
In this section we present the monadic fixed-point and counting 
extension of first order logic interpreted over improvement graphs. 
The use of the fixed-point extension is motivated by the fact that
we wish to express properties like acyclicity of the graph, which
is not first-order expressible. As we will see below, we need the
fixed-point quantifier to range only over collections of nodes, and
hence monadic fixed-point quantifiers suffice for our purpose. The
counting extension helps us to count nodes in a subgraph, or along
a path; this helps us express notions like fairness of schedules,
which is of relevance in specifying improvement dynamics.

An alternative formalism would be the transitive closure extension of
first order logic. But as Grohe has shown \cite{grohePhd}, monadic lfp
logic is strictly less expressive than transitive closure logic, and
hence we prefer a minimal extension of first order logic that serves
our purposes. Note that the counting extension does not add
expressivess but only succinctness. This is of use when we discuss
concurrent deviation by a subset of players.

\subsection{MLFPC Syntax}
Let $\sigma$ be a first order relational vocabulary. Let 
$(S_i)_{i \in \mathbb{N}}$ be a sequence of monadic relation symbols, 
such that for each $i$, $S_i\ \not\in\ \sigma$. These are the second order
fixed-point variables of the logic.

The set of all MLFPC formulas, $\Phi_{\mathit{MLFPC}}$, can be defined inductively as follows: 
\begin{itemize}
\item Every FO-atomic formula $\alpha$ over $\sigma \cup \{S_1,S_2, \dots\}$ 
is an MLFPC formula.

$\fv^{1}(\alpha)=$ the set of first order free variables in $\alpha$. 
$\fv^{2}(\alpha)=$ the set of all relation symbols $S_i$ occurring in $\alpha$;
$\fv(\alpha)=\fv^{1}(\alpha)\ \cup\ \fv^{2}(\alpha)$. 

\item If $\alpha,\beta$ are MLFPC formulas then so are, $\sim\alpha$, 
$\alpha \wedge \beta$ and  $\alpha \vee \beta$. 

$\fv(\sim\alpha)=\fv(\alpha)$ and 
$\fv(\alpha \wedge \beta)=\fv(\alpha \vee \beta)= \fv(\alpha) \cup \fv(\beta)$.
 
\item If $\alpha$ is a MLFPC formula, $x \in \fv^{1}(\alpha)$ and 
$k \in \mathbb{N}$, then so are $\exists x \alpha$, $\forall x \alpha$ and 
$C_{x} \alpha \leq k$. 
  
$\fv(\exists x \alpha)=\fv(\forall x \alpha)=\fv(C_{x} \alpha \leq k)
= \fv(\alpha)\setminus \{x\}$. 
  
\item If $\alpha$ is an MLFPC formula, $S_i \in \fv^{2}(\alpha)$, 
$x \in \fv^{1}(\alpha)$, and $u \not\in  \fv^{1}(\alpha)$ and 
$S_i$ occurs positively in $\alpha$, then $[\lfp_{S_{i},x}\ \alpha](u)$ is 
an MLFPC formula. 

$\fv([\lfp_{S_{i},x}\ \alpha](u))=\fv(\alpha)\setminus \{S_i,x\} \ \cup \{u\}$. 
   
\end{itemize}

The restriction to positive second order variables in the lfp operator
is essential to provide an effective semantics to the logic. It is a
standard way of ensuring monotonicity, given that we do not have an
effective procedure to test whether a given first-order formula is
monotone on the class of finite $\sigma$-structures
\cite{libkin2013elements}.

It should be noted that the use of positive second order variables in no
way restricts us to contexts where equilibria are guaranteed to exist.
Equilibria are given by graph properties, and these variables allow us
to collect sets of vertices monotonically.

\subsection{MLFPC Semantics}
To interpret formulas, we extend $\sigma$-structures with interpretations for 
the free first order and second order variables (the latter from the given 
sequence $(S_i)_{i \in \mathbb{N}}$). Let $\mathfrak{A} $ be a $\sigma$-structure, which has domain A.
 The semantics for the first order formulas of MLFPC are standard. The semantics for the count and the fixed 
point operator are given below.

$$ \mathfrak{A} \models\ C_x \phi(x,\vec{y}) < k \iff\ | \{a \in A |\ \mathfrak{A},x \mapsto a\models \phi \} | < k$$ 

$$\mathfrak{A},u \mapsto a \models [\lfp_{S_i,x} \alpha](u)\ \iff\  a \in \text{lfp}(f_{\alpha})$$ where let, $[\lfp_{X,x} \alpha](u)=\beta$, $\fv(\beta)=\{x,y_1,\dots,y_n,S_1,\dots,S_m\}$, $f_{\alpha}$ is an operator such that, $f_{\alpha}\ :\ \wp(A) \mapsto \wp(A)$ , $B \subseteq A$, $f_{\alpha}(B)=\{a\in A\ |\ \mathfrak{A},S_i \mapsto B, x \mapsto a\ \models\ \beta \}$

The lfp quantifier induces an operator on the powerset of elements on the 
structure ordered by inclusion. The positivity restriction ensures that the
operator is monotone and hence least fixed-points exist.

\subsection{Properties}
\label{sec:prop}
Since the models of interest are improvement graphs, first order variables
range over nodes in the graph, monadic second order variables range over
subsets of nodes and the vocabulary consists of binary relations $E_u$,
where $u \subseteq [n]$. When $|u| = 1$ and $u = \{i\}$, we will simply
write the relation as $E_i$. We write formulas of the form $E(x,y)$
to denote $\bigvee \limits_{i\ \in\ [n]} E_i(x,y)$. 

We now write special formulas that will be of interest in the sequel.

\begin{itemize}
\item $\mathit{sink}(x) = \forall y. \sim E(x,y)$
\item $\mathit{trap}(S,x) = \forall y. (E(y,x) \implies S(y))$ 
\item $\mathit{acyclic} = \forall u. [\lfp_{S,x}\ \mathit{trap}](u)$
\item $\mathit{reach}(S,x) = \mathit{sink}(x) \lor \exists y (E(y,x) \land S(y))$ 
\item $\mathit{weakly-acyclic} = \forall u. [\lfp_{S,x}\ \mathit{reach}](u)$
\end{itemize}

Now consider the formulas interpreted over improvement graphs of normal 
form games. $sink$ refers to the set of sink nodes, and these are
exactly the Nash equilibria of the associated game. The sentence
$acyclic$ is true exactly when the improvement graph is acyclic
and hence such games have the finite improvement property (as every
improvement path is finite). To see that the sentence captures
acyclicity, note the action of the lfp operator: at the zeroth 
iteration, we get all nodes with in-degree $0$; we then get all
nodes which have incoming edges from nodes whose in-degree is $0$;
and so on. Eventually it collects all nodes through which no path
leads to a cycle. Since the sentence applies to every node, we 
infer that the graph does not contain any cycle.

For weak acyclicity, we require that there exists a finite improvement
path starting from  every node. Again the lfp operator picks up sink
nodes at the zeroth iteration, then all nodes that have a sink node
as successor, and so on. Eventually it collects all nodes that 
start finite improvement paths. The sentence asserts that every node
has this property.

Note that these formulas capture equilibria, FIP and weak acyclicity
on improvement graphs {\em per se}, irrespective of whether they 
arise from normal form games, voting systems, or resource allocations.
But there were some differences in the way these graphs were generated,
and we turn our attention to these differences.

In improvement graphs of resource allocation systems, we defined the
edge relation to be $s \rightarrow_u s'$, where $u \subseteq [n]$, 
if $s[j] = s'[j]$ for $j \not\in u$, $\bigcup_{i \in u} s[i] = 
\bigcup_{i \in u} s'[i]$ and for all $i \in u$, $s'[i] \succ_i s[i]$. 
Note that an analogous definition in the case of normal form games
leads us naturally to a {\em concurrent setting}, and we get what
are called $k$-equilibria in {\em coordination games}.

Let $s$ and $s'$ be strategy profiles in a normal form game, $u
\subseteq [n]$. Define $s \rightarrow_u s'$ when $s[j] = s'[j]$ for $j
\not\in u$, and for all $i \in u$, $s'[i] \succ_i s[i]$. That is, with
the choices of the other agents fixed, the coalition of agents in $u$
can coordinate their choices and deviate to get a better outcome. Thus
$k$-equilibria are nodes from which no coalition of at most $k$ agents
can profitably deviate.  We can then define a coalitional $k$
improvement path, where at each step a coalition of at most $k$ agents
deviate, which leads us further to a $k$-FIP. \cite{WAG} shows that a
class of {\em uniform} coordination games has this property.

\begin{itemize}
\item $\mathit{sink}_u(x) = \forall y. \sim E_u(x,y) $
\item $\mathit{sink}_k(x) = \bigvee\limits_{u \subseteq [n], |u| \leq k} \mathit{sink}_u(x) $
\item $k-\mathit{edge}(x,y) = \bigvee\limits_{u \subseteq [n], |u| \leq k} E_u(x,y) $
\item $k-\mathit{trap}(S,x) = \forall y. (k-\mathit{edge}(y,x) \implies S(y)) $ 
\item $k-\mathit{FIP} = \forall u. [\lfp_{S,x}\ k-\mathit{trap}](u) $
\end{itemize}

Note that the disjunctions are large, exponential in $k$. Since we 
have a counting operator, we could add further structure to nodes, 
prising out the individual strategies of players and then use the 
counting quantifier over these to get a succinct formula, linear 
in $k$. 

Observe that these formulas apply to coalitions of $k$-agents performing
swaps in resource allocation systems and we thus uniformly transfer 
notions from the concurrent setting of games to resource allocation 
systems as well. Proceeding further, we get a similar notion for
voting systems, where again a subset of voters can get together 
and agree on revising their expressed preferences, thus leading 
us to coalitional improvement paths and coalitional FIP.

In general, we might want to specify reachability of a set of
distinguished nodes satisfying some property. For instance, in
the context of allocation systems, we are interested in nodes
that are {\em envy free}. An agent $i$ envies an agent $j$ at 
node $x$ if there exists a node  $y$ such that $y \succ_i x$
and the allocation for $i$ at $y$ is the same as the allocation
for $j$ at $x$. A node is envy-free if no player envies another
at that node. We might then want to assert that an envy free
node is reachable from any node. Note that we only need to
enrich the first order vocabulary to speak of $x[i], y[j]$ etc
to express envy-freeness, and the lfp operator is sufficient
to specify reachability of such nodes.

\begin{itemize}
\item $\mathit{reach}_\phi(S,x) = \phi(x) \lor \exists y (E(y,x) \land S(y)) $
\item $\phi-\mathit{reachable} = \forall u. [\lfp_{S,x}\ \mathit{reach}](u)$
\end{itemize}

In the context of voting systems, a similar strengthening of the
syntax of atomic formulas can lead to amusing specification of
reachability of singular profiles in which all voters have identical
first and last preference but disagree on all candidates ranked by
them in between. 

Further, the counting quantifier can give us interesting relaxations
of improvement dynamics. For instance consider the following 
specifications:

\begin{itemize}
\item $\mathit{reach}(S,x) = \mathit{sink}(x) \lor \exists y (E(y,x) \land S(y))$ 
\item $\mathit{path}-\mathit{count} = C_u ([\lfp_{S,x}\ \mathit{reach}](u)) < 5 $
\end{itemize}

This specifies that at most $5$ nodes have finite improvement
paths originating from them. Clearly, such specifications are
of greater interest in voting systems than in the others. Here
the lfp operator is in the scope of the counting quantifier. 
In the following specification, we have them the other way about.

\begin{itemize}
\item $\mathit{count}-\mathit{trap}(S,x) = C_y E(y,x) < k \implies (\forall z. (E(z,x) \implies S(z))) $
\item $\mathit{special} = \forall u. [\lfp_{S,x}\ \mathit{count}-\mathit{trap}](u)$
\end{itemize}

Consider the housing market model mentioned in the previous
section. This framework corresponds to a resource allocation problem
where a valid allocation corresponds to a bijection between agents and
resources and each agent is initially assigned an item. Stability is
an important solution concept in this setting. The dynamics in which
agents start with the initial allocation and repeatedly exchange items
provided it is a profitable move for all the agents involved in the
exchange is quite natural. The obvious question is whether this
process converges to a stable outcome. If we model the associated
improvement dynamics graph as exchange between pairs of agents then
each action corresponds to resolving a \textit{blocking pair}. The
sentence \textit{sink} then refers to the existence of 2-stable
outcomes and \textit{acyclic} asserts that every sequence of blocking
pair leads to a 2-stable outcome. If we interpret the improvement
dynamics graph as capturing exchange of items within a coalition of
agents then these sentences correspond to existence and convergence to
core-stable outcome. For the housing allocation problem, the existence
of core-stable outcome is guaranteed by the Top Trading Cycle
algorithm. 
There are variants of the housing
allocation problem where stable outcomes are not always guaranteed to
exists (for instance, in the presence of externalities). In this case,
by using the counting quantifier, we can make interesting assertions
like the number of nodes that have finite improvement paths
originating from them.

\section{Model Checking Algorithm}
In this section we discuss the model checking problem for MLFPC. Given
$\phi \in \Phi_{\mathit{MLFPC}}$ and a $\sigma$-structure
$\mathfrak{A}$, the model checking problem is to check if
$\mathfrak{A} \models \phi$. We show that the logic admits an
efficient model checking procedure (Algorithm 1). The main
idea is to use the model checking algorithm for FO and modify it
accordingly for the newly introduced quantifiers of the count and the
least fixed point.

\begin{tcolorbox}[width=5.5in,breakable, title= Algorithm 1 for Checking of $\mathfrak{A} \models \phi$]
\textbf{Input} : $\mathfrak{A}, \phi $ \\
\textbf{Output} : Sol $\subseteq A^{\fv^{1}(\phi) }$ \\
$\text{Sub}_{\phi}$ = $\{$ subformulas of $\phi \}$ /* ordered via subformula ordering $ \leq_{s}$ */ \\
Q = $\{ q_\alpha |\ \alpha\ \in\ \text{Sub}_{\phi} \}$ \\
\begin{enumerate}[\indent {}]
\item{\textbf{for}}  $\alpha \in \text{Sub}_{\phi}$
\begin{enumerate}[\indent {}]
  \item{\textbf{switch}} \textit{type of} $\alpha$
   \begin{enumerate}
      \item{\textbf{case}} $\alpha$ is an atomic formula \\  $q_{\alpha}$ is computed by reading off the structure $\mathfrak{A}$
      \item{\textbf{case}} $\alpha = \sim\beta$ \\ $q_{\alpha}$ is just the negation of $q_{\beta}$ /*since $\beta \leq_s \alpha,\ q_{\beta}$ is already computed */
      \item{\textbf{case}} $\alpha =  \beta_1 \wedge \beta_2$ \\
          Let P = $\fv^1(\beta_1) \cap \fv^1(\beta_2)$,
      Q = $\fv^1(\beta_1) \setminus P$,
      R = $\fv^1(\beta_2) \setminus P$ \\
      \textbf{for} $\vec{x} \in A^{|P|}, \vec{y} \in A^{|Q|}, \vec{z} \in A^{|R|} $ \\
      \begin{enumerate}[\indent {}]
      
       \item{\textbf{if}} $q_{\beta_1}(\vec{x},\vec{y})=1 \text{ and } q_{\beta_2}(\vec{x},\vec{z})=1$ \\
       \hspace*{1em} $q_{\alpha}(\vec{x},\vec{y},\vec{z})=1$\\
      \item{\textbf{else}} $q_{\alpha}(\vec{x},\vec{y},\vec{z})=0$
      \end{enumerate}
      \item{\textbf{case}} $\alpha = \exists y \beta$ \\
      \textbf{for} $\vec{x} \in A^{|\fv^1(\alpha)|}, a \in A $ 
      \\
       \hspace*{1em}\textbf{if} $q_{\beta}(\vec{x},a)=1$
       \\ \hspace*{1.3em}$q_{\alpha}(\vec{x})=1$
      
     \item{\textbf{case}} $\alpha = C_{y} \beta \leq k$ \\
     Let count = 0 \\
     \textbf{for} $\vec{x} \in A^{|\fv^1(\alpha)|}$
     \begin{enumerate}[\indent {}]
     
     \item{\textbf{for}} $a \in A$ \\
     \hspace*{1em} \textbf{if} $q_{\beta}(\vec{x},a)=1$ \textbf{then} count = count + 1
     
     \item{\textbf{if}} count $\leq$ k \textbf{then} $q_{\alpha}(\vec{x})=1$
     
     \item count = 0
     \end{enumerate}
     \item{\textbf{case}} $\alpha =  [\lfp_{S_{i},x}\ \beta](u)$ \\ lfp($\alpha$) /*subroutine call */
        
   \end{enumerate}
\end{enumerate}
\end{enumerate}
\end{tcolorbox}

\begin{tcolorbox}[width=5.5in, breakable, title= Algorithm 2 : lfp($\alpha$) - Subroutine call for computing the least fixed point ]
\textbf{Input} : $\alpha =  [\lfp_{S_{i},x}\ \beta](u)$ \\
\textbf{Output} : $q_{\alpha}$ \\
Let $\fv(\beta)=\{x,\vec{y},S_{i},\vec{Y}\}$ , $\fv(\alpha)=\{u,\vec{y},\vec{Y}\}$ \\
\textbf{for} $\vec{a} \in A^{|\vec{y}|}$\\
\begin{enumerate}[\indent {}]
\item iter = $\{\}$, $f^{\vec{a}}_{\beta} = \{\}$
\item \textbf{do}
\begin{enumerate}[\indent {}]
\item iter = $f^{\vec{a}}_{\beta}$
\item /* call to a FO-model checking procedure, where the second order variables are fixed  */
\item $f^{\vec{a}}_{\beta} = \{b \in A\ |\ \mathfrak{A}, S_i \mapsto f^{\vec{a}}_{\beta}, x \mapsto b, y \mapsto \vec{a} \ \models\ \beta \}$ 
\item \textbf{for} $a\in A$ \\
\hspace*{1em} \textbf{if} $a \in f^{\vec{a}}_{\beta}$ \textbf{then} $q_{\alpha}(\vec{a},a)=1$
  
\end{enumerate}

\item{\textbf{while}} $f^{\vec{a}}_{\beta} \not= iter$
\end{enumerate}
\end{tcolorbox}

\begin{theorem}
Given $\phi$ and $\mathfrak{A}$, Algorithm 1 decides if
$\mathfrak{A} \models \phi$ in polynomial time.
\end{theorem}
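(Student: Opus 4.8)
The plan is to prove correctness by induction on the subformula ordering $\le_s$ and then to bound the running time case by case. For correctness I would maintain the invariant that, once the main loop of Algorithm~1 has processed a subformula $\alpha$, the table $q_\alpha$ records exactly the satisfying assignments of $\alpha$, i.e.\ $q_\alpha(\vec a)=1$ iff $\mathfrak{A},\fv^1(\alpha)\mapsto \vec a\models\alpha$. Because subformulas are processed in the order $\le_s$, every proper subformula of $\alpha$ has already been handled when $\alpha$ is reached, so its table is available as the induction hypothesis. The base case (atomic $\alpha$) holds since $q_\alpha$ is read directly from the interpretations in $\mathfrak{A}$. The Boolean cases are immediate: $q_{\sim\beta}$ is the pointwise complement of $q_\beta$, and the conjunction case forms the natural join over the shared variables $P=\fv^1(\beta_1)\cap\fv^1(\beta_2)$, which is correct by the semantics of $\wedge$. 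The quantifier case projects out the bound variable, and the counting case tallies, for each parameter tuple $\vec x$, the number of witnesses $a$ with $q_\beta(\vec x,a)=1$ and compares it with $k$; both match the stated semantics of $\exists$ and $C_y\beta\le k$.

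The crux is the fixed-point case $\alpha=[\lfp_{S_i,x}\beta](u)$, handled by Algorithm~2. Here I would invoke the positivity restriction: since $S_i$ occurs only positively in $\beta$, the induced operator $f^{\vec a}_\beta:\wp(A)\to\wp(A)$, for each fixed value $\vec a$ of the remaining first-order parameters, is monotone with respect to $\subseteq$. On the finite powerset lattice $\wp(A)$, the Knaster--Tarski theorem gives $\mathrm{lfp}(f^{\vec a}_\beta)=\bigcup_{n\ge 0}(f^{\vec a}_\beta)^n(\emptyset)$, and this ascending chain $\emptyset\subseteq f^{\vec a}_\beta(\emptyset)\subseteq\cdots$ strictly increases until it stabilises, hence reaches the fixed point after at most $|A|+1$ iterations. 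Algorithm~2 is exactly this iteration: it starts from the empty set, repeatedly recomputes $f^{\vec a}_\beta$ by a call to the already-established FO model-checking procedure with $S_i$ bound to the current set, and halts when no change occurs. Thus termination is guaranteed by finiteness of the lattice together with monotonicity, and the value returned is precisely $\mathrm{lfp}(f^{\vec a}_\beta)$, matching the semantics of the lfp operator. This completes the induction and shows that $q_\phi$ is correct, so the algorithm decides $\mathfrak{A}\models\phi$.

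For the running-time bound I read the statement as data complexity: $\phi$ is fixed and the measure is the size $|\mathfrak{A}|$ (equivalently $|A|$) of the input structure, as announced in the abstract. Let $w$ be the maximum number of free first-order variables occurring in any subformula of $\phi$; since $\phi$ is fixed, $w$ is a constant, and every table $q_\alpha$ has at most $|A|^{w}$ entries. The number of subformulas is $O(|\phi|)=O(1)$, so it suffices to bound the cost per case. The atomic, negation, conjunction, existential and counting cases each iterate over at most $|A|^{w}\cdot|A|=|A|^{O(1)}$ tuples, and the counter in the counting case is bounded by $|A|$. For the lfp case, the outer loop ranges over at most $|A|^{w}$ parameter tuples $\vec a$; for each, the fixpoint iteration performs at most $|A|+1$ rounds, and each round is one FO model check on the fixed subformula $\beta$, itself polynomial in $|A|$. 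Hence every case, and therefore the whole algorithm, runs in time polynomial in $|A|$.

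The main obstacle is the fixed-point case: one must justify both that the iteration halts and that it converges to the \emph{least} fixed point, and the only place this can fail is if the operator is not monotone. The argument therefore hinges on the positivity restriction in the syntax, which is precisely what guarantees monotonicity of $f^{\vec a}_\beta$; the remaining cases are routine bookkeeping. The sole conceptual caveat is that the polynomial bound is in the size of the structure with the formula held fixed, since the combined complexity is worse because $w$ may grow with $|\phi|$.
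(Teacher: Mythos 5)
Your proposal is correct and follows essentially the same route as the paper's proof: a bottom-up induction over the subformula ordering maintaining polynomial-size truth tables, with the lfp case handled by the monotone iteration from $\emptyset$ (monotonicity from the positivity restriction) stabilising within $|A|+1$ rounds, and the polynomial bound read as data complexity with the formula held fixed. If anything, you are more explicit than the paper on the correctness invariant and the Knaster--Tarski justification, which the paper leaves largely implicit while concentrating on the running-time accounting.
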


\begin{proof}
Given $\varphi \in \Phi_{\textit{MLFPC}}$, let $\text{Sub}_{\varphi}$
denote the set of subformulas of $\varphi$ and $\leq_{s}$ denote the
corresponding subformula ordering.
The idea follows the algorithms outlined for model checking procedures
in FO and combines with what is known about the least fixed point. We
maintain a polynomial time computable relational list $Q$ of
polynomial size.
We will basically follow the proof of the first order logic and argue
similarly for the count and least fixed point operators
introduced. Let $A$ be the underlying domain of $\mathfrak{A}$.
For each $\alpha \in \text{Sub}_{\phi}$ let $q_{\alpha} :
A^{\fv^{1}(\alpha)} \mapsto \{0,1\}$ where $q_{\alpha} (\vec{a}) = 0$
if $\mathfrak{A}, \vec{x} \mapsto \vec{a} \not\models \alpha$ and
$q_{\alpha} (\vec{a}) = 1$ otherwise.

\subparagraph{\textit{Induction Hypothesis}.} For all $\alpha \in \text{Sub}_{\phi}$, $Q_{\alpha}$ is of size $|A|^{O(1)}$ and can be computed in time $|A|^{O(1)}$.

\subparagraph{\textit{Base Case}.} When $\alpha$ is an atomic formula, then $Q_{\alpha}$ can be directly computed from $\mathfrak{A}$. Since each of the relations defined in $\mathfrak{A}$ are polynomial in the size of A and it would take a linear pass across the relations expressed to compute $Q_{\alpha}$. Thus, we can conclude that the total time taken is also polynomial in the size of A. 

\subparagraph{\textit{Induction Case}.}

\begin{itemize}
\item{$\alpha = \sim \beta$} By, I.H, since $\beta \leq_s \alpha$, we
  would have already computed $Q_{\beta}$. $Q_{\alpha} =
  A^{\fv^{1}(\alpha)} \setminus Q_{\beta} $ which can be computed by
  an algorithm running in time $|A|^{\fv^{1}(\alpha)}$.

\item {$\alpha = \beta_1 \wedge \beta_2$} By, I.H. we have in
  polynomial time been able to maintain the polynomial sized lists
  $Q_{\beta_{1}}$ and $Q_{\beta_{2}}$. Then the procedure outlined in
  the algorithm would take time $|A|^{\fv^{1}(\alpha)}$ and maintains
  a list of similar size.

\item{$\alpha = \exists x \beta$} By, I.H. we have in polynomial time
  been able to maintain the polynomial sized lists $Q_{\beta}$. The
  algorithm outlined would take $O(A^{\fv^1(\beta)})$ which would also
  happen to be the size of the list maintained by $Q_{\beta}$.

\item {$\alpha = C_{x} \beta \leq k$ } By, I.H. we have in polynomial
  time been able to maintain the polynomial sized lists
  $Q_{\beta}$. The outlined procedure computes $Q_{\alpha}$ in time
  $O(A^{\fv^1(\beta)})$.

\item{ $\alpha = [\lfp_{S_{i},x}\ \beta](u)$} First we note,
  $\fv(\beta)=\{x,\vec{y},S_{i},\vec{Y}\}$,
  $\fv(\alpha)=\{u,\vec{y},\vec{Y}\}$ and
  $|\fv^{1}(\alpha)|=|\fv^{1}(\beta)|$. What essentially happens here
  is that the second order variable $S_i$ gets used to generate an
  inductive relation via the fixed point computation and the newly
  introduced first order variable u is utilised to check validity of
  these formulas. So, unfortunately we cannot directly use $Q_{\beta}$
  to give a polynomial time procedure to generate $Q_{\alpha}$. We
  would rather use the fact that the least fixed point computation is
  a polynomial time computation even in the case of the first order
  logic with the count introduced. 

If we look at the inductive procedure to compute the fixed point we
see that the monadic fixed point operator starts at an empty set and
then converges to a subset of A. And at each stage there is an
increment in the number of elements of the output set by at least
1. Therefore this will run in maximum $O(|A|)$ time. Now for the call
to model checking at each stage we notice that since the formulas get fixed values assigned at each model checking call, we actually operate a FO-model checking call,
which we know happen to take $O(A^{|\beta|})$ time. We have a precomputation of $Q_{\beta}$, which we can additionally  make use of to reduce the time in the following manner. 
\\
For each choice of $b \in A$ the entire instance gets fixed and in thus a
linear pass through $|\beta|$ we would get to know whether the choice
of $b$ satisfies the formula or not. Therefore each stage that does the
FO-model checking would take time at most $O(|\beta|)$. Therefore
total time taken by the entire lfp computation part is
$O(A^{|\fv^1(\alpha)|} \times |\beta|)$. We are interested in the data
complexity of our procedure, where we can ignore the size of the formula,
which typically happens to be of lesser size than the model over which
the model checking procedure is held (reflects the practical circumstances). So we can conclude that our procedure
runs in time $O(A^{|\fv^1(\alpha)|})$. 
\end{itemize}
\end{proof}

\noindent {\bf Complexity of Algorithm 1.} The algorithm
iterates over all subformulas in increasing order.
The worst case running time of the lfp procedure for inputs
$\mathfrak{A}=(A, E_i,\ldots)$ and $\phi$ is $O(A^{|\fv^1(\alpha)|}
\cdot |\phi|^{2})$, where $|\fv^1(\alpha)| = \mathit{max}\ \{
|\fv^1({\alpha})| \ |\ \alpha \in \mathit{Sub}_{\phi} \}$. Thus the
running time of Algorithm 1 is $O(A^{|\fv^1(\alpha)|}\cdot
|\phi|^{2})$.
For the specific properties mentioned in section \ref{sec:prop}, note
that the corresponding MLFPC formulas refers to one second order
variable and two first order variables. Thus for all the properties
mentioned in section \ref{sec:prop}, the model checking procedure runs
in time $O(A^2 \cdot |\phi|^{2})$.

In the context of improvement graphs, if there are $n$ agents and at
most $m$ choices for each agent, the size of the associated
improvement graph is $O(m^n)$. Since it is possible to have a compact
representation for certain subclasses of strategic form games, for
instance, polymatrix games \cite{Jan68}, the size of the improvement
graph structure can be exponential in the representation of the game.
Thus the model checking procedure, while polynomial on the size of the
underlying improvement graph, can in principle, be exponential in the
size of the game representation. This observation may not be very
surprising since even for restricted classes of games like 0/1
polymatrix games, checking for the existence of Nash equilibrium is
known to be NP-complete \cite{ASW15}.

\section{Discussion}
We see this paper as a preliminary investigation, hopefully leading
to a descriptive complexity theoretic study of fundamental notions
in games and interaction. It is clear that fixed-point computations
underlie the reasoning in a wide variety of such contexts, and 
logics with least fixed-point operators are natural vehicles of
such reasoning. We expect that this is a minimal language for
improvement dynamics, but with further vocabulary restrictions
that need to be worked out. Proceeding further, we would like to delineate
bounds on the use of logical resources for game theoretic reasoning.
For instance, one natural question is the characterization of
equilibrium dynamics definable with at most one second order
(fixed-point) variable. 

Expressiveness needs to be sharpened from the perspective of models
as well. We would like to characterize the class of improvement
graphs for different subclasses of games, resource allocation
systems and voting rues, considering the wide variety of details
in the literature. This would in general necessaite enriching the
logical language and we wish to consider minimal extensions.

Another important issue is the identification of subclasses that
avoid the navigation of huge improvement graphs. Potential games
provide an interesting subclass and they correspond to some
appropriate allocation rules and forms of voting (under specific
election rules). But these are only specific exemplifying instances,
studying the stucture of formulas and their models will (hopefully)
lead us to many such correspondences.

An important direction is the study of infinite strategy spaces.
Clearly the model checking algorithm needs a finite presentation
of the input but this is possible and it is then interesting to
explore convergence of fixed-point iterations.

\bigskip

\noindent {\bf Acknowledgements.}  We thank the reviewers for their
insightful comments. We thank Anup Basil Mathew for discussions on
improvement dynamics. Sunil Simon was partially supported by grant
MTR/ 2018/ 001244.

\bibliographystyle{eptcs}
\bibliography{references.bib}

\end{document}